\newtheorem{Theorem}{Theorem}
\newtheorem{Proposition}{Proposition}
\newtheorem{definition}{Definition}
\newtheorem{Corollary}{Corollary}
\theoremstyle{remark}
\theoremstyle{remark}
\newcommand{\beq}{\begin{equation}}
\newcommand{\beql}[1]{\begin{equation}\label{#1}}
\newcommand{\eeq}{\end{equation}}
\theoremstyle{remark}
\newtheorem{example}{Example}
\newcommand{\C}{{\mathbb{C}}}
\newcommand{\Z}{{\mathbb{Z}}}
\newcommand{\Q}{{\mathbb{Q}}}
\newcommand {\N}{{\mathbb{N}}}
\DeclareMathOperator{\sgn}{sgn}
\DeclareMathOperator{\CT}{CT}
\begin{document}

\title{Inverse Participation Ratios in the XX spin chain}

\author{Emmanuel Tsukerman}
\date{\today}
\address{Department of Mathematics, University of California,
Berkeley, CA 94720-3840}
\email{e.tsukerman@berkeley.edu}

\begin{abstract}
We continue the study of the Inverse Participation Ratios (IPRs) of the XXZ Heisenberg spin chain initiated by Misguich, Pasquier and Luck (2016) by focusing on the case of the XX Heisenberg Spin Chain. For the ground state, Misguich et al. note that calculating the IPR is equivalent to Dyson's constant term ex-conjecture. We express the IPRs of excited states as an apparently new ``discrete" Hall inner product. We analyze this inner product using the theory of symmetric functions (Jack polynomials, Schur polynomials, the standard Hall inner product and $\omega_{q,t}$) to determine some exact expressions and asymptotics for IPRs. We show that IPRs can be indexed by partitions, and asymptotically the IPR of a partition is equal to that of the conjugate partition. We relate the IPRs to two other models from physics, namely, the circular symplectic ensemble of Dyson and the Dyson-Gaudin two-dimensional Coulomb lattice gas. Finally, we provide a description of the IPRs in terms of a signed count of diagonals of permutohedra.  
\end{abstract}

\maketitle

\section{Introduction}

Recently in \cite{misguich2016inverse}, Misguich, Pasquier and Luck began a numerical study of the so-called Inverse Participation Ratios (IPRs) in the spin-1/2 XXZ  chains. IPRs give a measure of localization of the system with respect to a preferential basis. Roughly speaking, large IPRs imply that the system will be found in a single state, whereas small IPRs imply that the system is equally likely to be found in any available state. The interest in the question stems from the fact that integrable systems generally fail to reach thermal states. Misguich et al. ask in particular about the IPRs in the special case of the XX spin chain  with respect to the Ising basis.

We let $L$ denote the length of the spin chain and restrict attention to the space spanned by Ising bases of $M$ down spins. To analyze the IPRs, we will proceed as follows. First, we will employ the Bethe ansatz to obtain an expression for the eigenstates in terms of the Ising basis. We will see then that the IPRs are naturally labeled by partitions, and are equal to 
\beq
\mathbf{t}_{\lambda}=(M!L^{2M})^{-1}  \sum_{\theta_1,\ldots,\theta_M \in \{0,\frac{2 \pi}{L},\ldots,\frac{2 \pi(L-1)}{L}\}} |s_\lambda(e^{i \theta_1},\ldots,e^{i \theta_M})|^4 |\prod_{i<j}|e^{i \theta_i}-e^{i \theta_j}|^4 , 
\eeq
where $s_\lambda$ is the Schur polynomial (a specialization of the Jack polynomials). This will naturally lead us to the theory of symmetric functions. We will analyze these quantities in complete mathematical rigor (section \ref{schurjack} and on).

Before employing techniques from symmetric function theory, we provide several physical interpretations of the IPRs and discuss the ground state. The calculation of the ground state IPR ($s_\lambda=1$) is equivalent to Dyson's constant term ex-conjecture, an important motivation for much of the development of the theory of Macdonald polynomials. For excited states, we can interpret the IPR as the expected value of $|s_\lambda|^4$ over the Dyson-Gaudin two-dimensional Coulomb lattice gas. In the limit $L \rightarrow \infty$ the gas is no longer restricted to a lattice. In this situation, the IPR can be interpreted as the expected value of $|s_\lambda|^4$ over the circular symplectic ensemble (CSE) of Dyson. The CSE is a modification of the Gaussian symplectic ensemble (GSE), a key matrix ensemble in random matrix theory.

Next we study the IPRs using the tools of symmetric function theory, namely, Jack polynomials, Hall inner product and $\omega_{q,t}$. We interpret the IPRs in terms of a ``discrete" Hall inner product and show that assuming, roughly, $L>2M$ the discrete Hall inner product and the traditional one coincide. This allows us to use the orthogonality of the Jack polynomials to evaluate the IPRs in terms of the transition coefficients from Schur polynomials to Jack polynomials. When these are known, we are able to determine exact expressions for the IPRs and their asymptotics. Finally, we show that curiously, if $\lambda'$ is the conjugate partition of $\lambda$, then 
\beq
\mathbf{t}_\lambda=\mathbf{t}_{\lambda'}
\eeq
assuming, roughly, that $L>2M \gg 1$.

\tableofcontents

\section{Inverse Participation Ratios}

In this section we provide the basic definition of the Inverse Participation Ratios (\cite{edwards1972numerical}, \cite{misguich2016inverse}).

\begin{definition}
Let the normalized eigenvectors of a hamiltonian $\mathcal{H}$ be $\{|\psi_i \rangle\}_{i=1,\ldots,D}$ and assume they are non-degenerate. The inverse participation ratio (IPR) of an eigenstate $|\psi_k \rangle$ in a preferential basis $\{|a_i\rangle\}_{i=1,\ldots,D}$ is
\beq
\mathbf{t}_k:=\sum_{i=1}^D | \langle a_i | \psi_k \rangle|^4.
\eeq
\end{definition}

The maximum value of this quantity is reached when an eigenstate coincides with a single basis state, in which case $t_{\text{max}}=1$. The minimum value is reached for eigenstates which are uniform superpositions of all the basis states, with the same modulus $|\langle a_i | \psi_k \rangle|=\frac{1}{\sqrt{D}}$. This maximally delocalized limit gives $t_{\text{min}}=\frac{1}{D}$.

We note that if
\beq
\tilde{\psi}_k=\sum_i c_i |a_i \rangle,
\eeq
is an unnormalized multiple of $\psi_k$, then
\beq
|\langle a_i | \psi_k \rangle|^4=\frac{|c_i|^4}{(\sum_j |c_j|^2)^2}
\eeq
and
\beql{IPR}
\mathbf{t}_k=\frac{\sum_i |c_i|^4}{(\sum_j|c_j|^2)^2}.
\eeq

\section{Coordinate Bethe ansatz for the periodic XX Heisenberg model}

We consider the XX spin chain (XXZ with anisotropy parameter $\Delta=0$) with periodic boundary conditions and $L$ sites. The Hamiltonian is given by
\beq
\mathcal{H}=\sum_{i=1}^{L} S_i^x S_{i+1}^x +S_i^y S_{i+1}^y. 
\eeq

We will single out the preferential basis called the Ising configuration for the IPRs. These are the eigenstates of all $S_i^z$:
\beq
| \uparrow \uparrow \downarrow \cdots \rangle , | \uparrow \downarrow \uparrow \cdots \rangle, \ldots
\eeq  
We restrict attention to the subspace spanned by Ising configurations having $M$ down spins. This space has dimension $\binom{L}{M}$. 
The coordinate Bethe ansatz tells us that the wavenumbers $k_j,j=1,\ldots,M$, and Bethe roots $I_j,j=1,\ldots,M$, satisfy
\beq
Lk_j =2 \pi I_j, \quad j=1,2,\ldots,M.
\eeq
For simplicity, we will be assuming that $M$ is odd, so that the $I_j$ are in $\{0,1,\ldots,L-1\} \mod L$ \cite{vsamaj2013introduction}. All statements can be adapted to the case of even $M$, in which case the $I_j$ are half integers.
To avoid the nullity of the wavefunction, the wavenumbers must be distinct.

We use the shorthand notation 
\beq
|\mathbf{x} \rangle = |x_1 \cdots x_M \rangle, \quad x_1 < x_2 < \cdots < x_M
\eeq
for the Ising basis with down spins at $x_1,\ldots,x_M$ and
\beq
\mathbf{k}=(k_1,\ldots,k_M).
\eeq
The eigenvectors of $\mathcal{H}$ are now given by
\beq
\psi_{\mathbf{k}}=\sum_{\mathbf{x}} c(\mathbf{x}) |\mathbf{x} \rangle=\sum_{\mathbf{x}} \sum_{ \pi \in S_M} \sgn(\pi) e^{i \pi(\mathbf{k}) \cdot \mathbf{x}} |\mathbf{x} \rangle=\sum_{\mathbf{x}} \det(e^{ik_a x_b})_{a,b}|\mathbf{x} \rangle.
\eeq
By \eqref{IPR}, the IPRs are equal to
\beql{IPRforxx}
\mathbf{t}_k=\frac{\sum_{\mathbf{x}} |\det(e^{ik_a x_b})_{a,b}|^4 }{(\sum_{\mathbf{x}} |\det(e^{ik_a x_b})_{a,b}|^2 )^2}.
\eeq
The denominator of \eqref{IPRforxx} is relatively easy to evaluate. We have
\beq
|c(\mathbf{x})|^2=  c(\mathbf{x}) \overline{c}(\mathbf{x})= (\sum_{ P \in S_M} \sgn(P) e^{i P(\mathbf{k}) \cdot \mathbf{x}} ) (\sum_{ Q \in S_M} \sgn(Q) e^{-i Q(\mathbf{k}) \cdot \mathbf{x}} )=
\eeq
\beq
=\sum_{P,Q \in S_M} \sgn(P) \sgn(Q) e^{i(P(\mathbf{k})- Q(\mathbf{k})) \cdot \mathbf{x}}. 
\eeq
We now use the fact that the determinant is zero whenever any two of $x_i$ are equal. This allows us to remove the restriction of the $x_i$ being distinct. In addition, we must divide by $M!$ to account for the order. We also note that
\beq
\sum_{x_1,\ldots,x_M=1}^{L} \prod_{j=1}^M e^{ix_j m_j}=\prod_{j=1}^M\sum_{x_j=1}^{L}  e^{ix_j m_j}.
\eeq
We have
\beq
\sum_{\mathbf{x}}|c(\mathbf{x})|^2=\frac{1}{M!}\sum_{x_1,\ldots,x_M=1}^{L}|c(\mathbf{x})|^2=\frac{1}{M!}\sum_{x_1,\ldots,x_M=1}^{L}\sum_{P,Q \in S_M} \sgn(P) \sgn(Q) \prod_{j=1}^M  e^{i x_j(k_{P(j)}- k_{Q(j)})}
\eeq
\beq
=\frac{1}{M!}\sum_{P,Q \in S_M} \sgn(P) \sgn(Q) \sum_{x_1,\ldots,x_M=1}^{L} \prod_{j=1}^M  e^{i x_j(k_{P(j)}- k_{Q(j)})}=
\eeq
\beq
=\frac{1}{M!}\sum_{P,Q \in S_M} \sgn(P) \sgn(Q)\prod_{j=1}^M  \sum_{x_j=1}^{L}  e^{i x_j(k_{P(j)}- k_{Q(j)})}.
\eeq

Now $k_j=\frac{2 \pi I_j}{L}$ and $0<|I_j-I_i|<L$. Consequently,
\beq
\sum_{x_j=1}^L e^{i x_j(k_{P(j)}- k_{Q(j)})}=L\delta(k_{P(j)}=k_{Q(j)}).
\eeq
Substituting,
\beq
\sum_{\mathbf{x}}|c(\mathbf{x})|^2=L^M\frac{1}{M!}\sum_{P,Q \in S_M} \sgn(P) \sgn(Q)\prod_{j=1}^M  \delta(k_{P(j)}=k_{Q(j)})=L^M.
\eeq

Therefore
\beq
\mathbf{t}_k=\frac{\sum_{\mathbf{x}} |\det(e^{ik_a x_b})_{a,b}|^4 }{L^{2M}}.
\eeq

Next, set $\theta_j:=\frac{2 \pi}{L}x_j$, so that 
\beq
e^{ik_a x_b}=e^{i I_a \theta_b}.
\eeq

We set
\beq
c(\boldsymbol{\theta})=\det(e^{i I_a \theta_b})_{a,b}
\eeq
and introduce the notation $\sum_{\boldsymbol{\theta}}$ to denote summation over distinct values of $\theta_1,\ldots,\theta_M$ in $\frac{2 \pi}{L}\{0,\ldots,L-1\}$.

\begin{example}
\underline{IPR of the ground state}. For the ground state, 
\beq
\{I_1,I_2,\ldots,I_M\}=\{\frac{M-1}{2},\frac{M-1}{2}-1,\ldots,-\frac{M-1}{2}\}.
\eeq
Therefore
\beql{cdet}
c(\boldsymbol{\theta})=\det\left(\begin{array}{cccccc}
e^{i(-\frac{M-1}{2})\theta_1} &  e^{i(-\frac{M-1}{2})\theta_2} & \cdots & e^{i(-\frac{M-1}{2})\theta_M} \\
e^{i(-\frac{M-1}{2}+1)\theta_1} &  e^{i(-\frac{M-1}{2}+1)\theta_2} & \cdots & e^{i(-\frac{M-1}{2}+1)\theta_M} \\
\vdots & \vdots & \cdots & \vdots \\
e^{i(\frac{M-1}{2})\theta_1} &  e^{i(\frac{M-1}{2})\theta_2} & \cdots & e^{i(\frac{M-1}{2})\theta_M}
 \end{array}\right).
\eeq
We notice that \eqref{cdet} is equal to
\beq
e^{i(-\frac{M-1}{2})(\theta_1+\ldots+\theta_M)} \det\left(\begin{array}{cccccc}
1 & 1 & \cdots & 1 \\
e^{i\theta_1} &  e^{i\theta_2} & \cdots & e^{i\theta_M} \\
\vdots & \vdots & \cdots & \vdots \\
e^{i(M-1)\theta_1} &  e^{i(M-1)\theta_2} & \cdots & e^{i(M-1)\theta_M}
 \end{array}\right),
\eeq
the determinant being a Vandermonde determinant: 
\beq
\det\left(\begin{array}{cccccc}
1 & 1 & \cdots & 1 \\
e^{i\theta_1} &  e^{i\theta_2} & \cdots & e^{i\theta_M} \\
\vdots & \vdots & \cdots & \vdots \\
e^{i(M-1)\theta_1} &  e^{i(M-1)\theta_2} & \cdots & e^{i(M-1)\theta_M}
 \end{array}\right)=\prod_{j<k} (e^{i \theta_k}-e^{i \theta_j}).
\eeq
Therefore the IPR of the ground state is equal to
\beq
\mathbf{t}_0=\frac{1}{L^{2M}} \sum_{\boldsymbol{\theta}} \prod_{j<k} |e^{i \theta_k}-e^{i \theta_j}|^4.
\eeq
We will complete the evaluation in \eqref{ground}.
\end{example}

Notice that $|\det(e^{i I_a \theta_b})_{a,b}|$ is invariant under shifts of the Bethe roots $I$, since 
\beq
\det(e^{i (I_a+n) \theta_b})_{a,b}=e^{in( \theta_1+\ldots+\theta_M)}\det(e^{i I_a \theta_b})_{a,b}.
\eeq
Therefore IPRs corresponding to two excited states whose Bethe roots are related by a shift will be equal.

 Assume without loss of generality that $I_1 > I_2 > \ldots > I_M$. Set 
\beq 
 \lambda_j=I_j-\frac{M+1}{2}+j
\eeq 
   so that $\lambda_1 \geq \lambda_2 \geq \ldots \geq \lambda_M$ yields a partition $\lambda=[\lambda_1,\ldots,\lambda_M]$. In this notation, 
\beq
|\det(e^{i I_a \theta_b})_{a,b}|=|s_\lambda(e^{i \theta_1},\ldots,e^{i \theta_M}) V(e^{i \theta_1},\ldots,e^{i \theta_M})|,
\eeq
where $s$ is the Schur polynomial (discussed in Section \ref{schurjack}) and $V(x_1,\ldots,x_M)$ is the Vandermonde determinant:
\beq
V(x_1,\ldots,x_M)=\prod_{i<j} (x_j-x_i).
\eeq
We then have
\beql{IPR2}
\mathbf{t}_{\lambda}=\frac{\sum_{\boldsymbol{\theta}} |s_\lambda(e^{i \theta_1},\ldots,e^{i \theta_M})|^4 |\prod_{i<j}|e^{i \theta_i}-e^{i \theta_j}|^4 }{L^{2M}},
\eeq
where we have introduced a labeling by partitions $\lambda$. 

\section{Connection between the IPRs of XX, the circular symplectic ensemble and the Dyson-Gaudin Coulomb gas}

We show in this section how to interpret the IPRs of XX in terms of expectation values of quantities for two-dimensional Coulomb gas on a one-dimensional lattice, or equivalently, as expectation values over a discretization of the circular symplectic ensemble.

In studying his famous random matrix ensembles, Dyson \cite{dyson1962statistical} developed a physical model for the eigenvalues of the matrices of the circular ensemble consisting of charges distributed on a unit circle in two-dimensions and experiencing Coulomb forces. In \cite{gaudin1973gaz}, Gaudin studied a discrete formulation of the problem, in which the gas particles lie on lattice sites of the circle. This formulation corresponds to the finite $L$ situation of the IPRs above, whereas Dyson's to the limit as $L\rightarrow \infty$. We will now explain the details.

\subsection{Circular symplectic ensemble}

The circular symplectic ensemble (CSE) is the space of self dual unitary quaternion matrices with a probability measure defined as follows. Each element is a unitary matrix, so it has eigenvalues on the unit circle $e^{i \theta_1},\ldots,e^{i \theta_M}$. The probability density function for the phases in the circular symplectic ensemble is given by
\beq
p(\theta_1,\ldots,\theta_M)=\frac{1}{Z_{M,4}} \prod_{1 \leq i < j \leq M}|e^{i \theta_i}-e^{i \theta_j}|^4.
\eeq
The normalization constant is given by
\beq
Z_{M,4}=(2 \pi)^M \frac{(2M)!}{2^M}.
\eeq
\subsection{Dyson-Gaudin Coulomb gas}
 
The positions which a unit charge can occupy on the circumference of a unit circle are restricted to $L$ equidistant points $\exp(i \theta_j)$, $\theta_j=2 \pi j/L$, $1 \leq j \leq L$. One considers three distinguished values of the inverse temperature $\beta=1,2,4$\footnote{These correspond respectively to the orthogonal, unitary and symplectic circular ensembles, which are closely related to Dyson's threefold way.} The joint probability density for $M$ unit charges to occupy positions $j_1,\ldots,j_M$ is given by
\beq
P_\beta(j_1,\ldots,j_M)=C^{-1}_{LM\beta} L^{-M} \exp(-\beta W).
\eeq
Here $W$ is the potential energy, calculated as follows. If we place point unit charges at angles $\theta_1,\ldots,\theta_M$ on a circle in two dimensions of radius $1$, then the potential energy is equal to
\beq
W=-\sum_{1 \leq j < k \leq M} \log |e^{i \theta_k }-e^{i \theta_j}|, \quad \theta_l=2 \pi j_l/L.
\eeq
Note in particular that
\beq
\exp(-\beta W)=\prod_{1 \leq j < k \leq M} |e^{i \theta_k}-e^{j \theta_j}|^\beta. 
\eeq

The expected value of a quantity $f(e^{i \theta_1},\ldots,e^{i \theta_M})$ is given by
\beq
\mathbb{E}_\beta (f)=\sum_{\boldsymbol{\theta}} f(e^{i \theta_1},\ldots,e^{i \theta_M}) P_\beta(j_1,\ldots,j_M)= C^{-1}_{LM\beta} L^{-M}\sum_{\boldsymbol{\theta}} f(e^{i \theta_1},\ldots,e^{i \theta_M})\prod_{1 \leq j < k \leq M} |e^{i \theta_k}-e^{i \theta_j}|^\beta .
\eeq
 In \cite{gaudin1973gaz}, Gaudin  calculates the partition function of the discrete Coulomb gas. In particular, he determines the normalization constant for $\beta=4$: 
\beq
\sum_{\boldsymbol{\theta}} \prod_{1 \leq j < k \leq M}|e^{i \theta_k}-e^{i \theta_j}|^4=\frac{(2M)!L^M}{2^M M!}
\eeq
Hence
\beq
\mathbb{E}_{\beta=4}(f)= \frac{2^M M!}{(2M)! L^M}\sum_{\boldsymbol{\theta}} f(e^{i \theta_1},\ldots,e^{i \theta_M})\prod_{1 \leq j < k \leq M} |e^{i \theta_k}-e^{i \theta_j}|^4. 
\eeq
We recognize that, 
\beq
\mathbf{t}_{\lambda} = \frac{(2M)!}{2^M M! L^M}\mathbb{E}_{\beta=4}(|s_P|^4).
\eeq
 Thus the IPR of the ground state is equal to
\beql{ground}
\mathbf{t}_0=\frac{(2M)!}{2^M M! L^M}
\eeq
as determined previously in \cite{misguich2016inverse} for $L=2M$.

\section{Jack and Schur polynomials \label{schurjack}}

In this section we gather the results from symmetric function theory that we will use to analyze the IPRs. The primary sources are \cite{MR1676282} and \cite{MR3443860}.

\subsection{Partitions}

A partition $\lambda=[\lambda_1,\lambda_2,\ldots]$ is a weakly decreasing sequence of nonnegative integers. The length $l(\lambda)$ is the number of nonzero entries. Let $m_i(\lambda)=m_i$ be the number of times $i$ appears in $\lambda$. Set
\beql{zstff}
z_\lambda=\prod_{i \geq 1} i^{m_i} m_i!.
\eeq
The weight of a partition $|\lambda|$ is $\sum_i \lambda_i$. Given two partitions $\lambda,\mu$ of equal weight, the dominance partial order is defined by
\beq
\lambda \succeq \mu \iff \sum_{i=1}^k \lambda_i \geq \sum_{i=1}^k \mu_i \, \forall k.
\eeq 
The Young diagram of a partition $\lambda$ is a convenient graphical representation. It is obtained by putting $\lambda_i$ boxes left-aligned at row $i$. The conjugate partition $\lambda'$ of $\lambda$ is obtained by reflecting the Young diagram of $\lambda$ about the diagonal.

\begin{example} \underline{Young diagrams and conjugate partitions}. The Young diagram of $\lambda=[2,1,1]$ is
\[
\yng(2,1,1).
\]
The conjugate partition $\lambda'$ is read off from the Young diagram after reflection in the diagonal:
\[
\yng(3,1).
\]
Thus $\lambda'=[3,1]$.
\end{example}

\subsection{Jack polynomials}

The Jack polynomials $J_{\lambda}^{(2/\beta)}$ are a notable family of symmetric polynomials parameterized by a real parameter $\beta$. For $\beta=2$, they specialize to scaled Schur functions. For $\beta=1$ one obtains the zonal polynomials and for $\beta=4$ the quaternion zonal polynomials \cite{MR2325917}. We will be using the ``J'' normalization. The Jack polynomials will be useful for us due to their property of orthogonalizing the circular ensembles 
\cite[(10.36)]{MR3443860}. We state this more precisely now.

 Let $\lambda$ and $\kappa$ be a pair of partitions. Let $\lambda'$ denote the transpose of the partition $\lambda$ and $l(\lambda)$ the number of nonzero parts in $\lambda$. Let $(i,j) \in \lambda$ refer to a cell of the Young diagram of $\lambda$ (the indexing begins at (1,1)). Set 
\beq
\mathcal{N}_\lambda^{(\alpha)}(M)=\prod_{(i,j)\in \lambda} \frac{M+(j-1)\alpha-(i-1)}{M+j \alpha-i}
\eeq
and
\beq
C_\lambda(\alpha)=\prod_{(i,j) \in \lambda} (\alpha(\lambda_i-j)+\lambda_j'-i+1)(\alpha(\lambda_i-j)+\lambda_j'-i+\alpha).
\eeq

The Jack polynomials satisfy \cite{MR3433582}
\begin{multline}
\int_{[0,2 \pi]^{M}} \frac{d \theta_1}{2\pi} \cdots \frac{d \theta_M}{2\pi} J_\kappa^{(2/\beta)}(e^{i \theta_1},\ldots,e^{i \theta_n}) \overline{J_\lambda^{(2/\beta)} (e^{i \theta_1},\ldots,e^{i \theta_n})} \prod_{j < k}|e^{i \theta_j}-e^{i \theta_k}|^{\beta}
\\
=\delta_{\kappa,\lambda} \delta(l(\lambda) \leq M) C_\lambda(2/\beta) \mathcal{N}_\lambda^{(2/\beta)}(M) \frac{\Gamma(1+M \beta/2)}{\Gamma(1+\beta/2)^M}.
\end{multline}
The factor $\frac{\Gamma(1+M \beta/2)}{\Gamma(1+\beta/2)^M}$ is the one from Dyson's ex-conjecture \cite{dyson1962statistical},
\beq
\int_{[0,2 \pi]^{M}} \frac{d \theta_1}{2\pi} \cdots \frac{d \theta_M}{2\pi}  \prod_{j < k}|e^{i \theta_j}-e^{i \theta_k}|^{\beta}=\frac{\Gamma(1+M \beta/2)}{\Gamma(1+\beta/2)^M}.
\eeq
The factor $\delta(l(\lambda) \leq M)$ is simply the statement that $J_\lambda^{(2/\beta)}=0$ if the number of parts of $\lambda$ is greater than $M$. 

We define a corresponding inner product 
\beq
 \langle f,g \rangle_\beta=\frac{1}{M!} \int_{[0,2\pi]^M} \frac{d \theta_1 \cdots d \theta_M}{(2 \pi)^M} f \bar{g} \prod_{j < k}|e^{i \theta_j}-e^{i \theta_k}|^{\beta}
\eeq
with respect to which
\beql{finiteHall}
\langle J_{\kappa}^{(2/\beta)},J_{\lambda}^{(2/\beta)} \rangle_\beta=\delta_{\kappa,\lambda} C_\lambda(2/\beta) \mathcal{N}_\lambda^{(2/\beta)}(M) \frac{\Gamma(1+M \beta/2)}{ M! \Gamma(1+\beta/2)^M}.
\eeq
This inner product can be thought of as a specialization of the Hall inner product to finitely many variables. It may also be written as the extraction of a constant term (see e.g., \cite[VI, 10.35]{MR3443860})
\beql{Hall}
\langle f, g \rangle_{\beta}=\frac{1}{M!} \int_{T^M} f(z) \overline{g(z)} \Delta(z;\beta)=\frac{1}{M!}\CT[f \bar{g} \Delta(z; \beta)]
\eeq
\beq 
   \Delta(z;\beta)=\prod_{i \neq j} (1-z_i z_j^{-1})^{\beta/2}, \qquad T=\{z:|z|=1\}. 
\eeq

A word about notation. Since our main interest will be with $\beta=4$, we will reserve the shorter $\mathcal{N}_\lambda, \mathcal{C}_\lambda$ for that case.

The Jack polynomials also form an orthogonal basis of the symmetric polynomials with respect to the Hall inner product \cite[(4.1),(4.3)]{MR3433582},

\beq
\langle J_\lambda^{(2/\beta)}, J_\mu^{(2/\beta)} \rangle_{\beta}'=\delta_{\lambda \mu} C_\lambda(2/\beta)
\eeq
defined on power sum polynomials by
\beq
\langle p_\lambda, p_\mu \rangle_{\beta}' = \delta_{\lambda \mu} (2/\beta)^{l(\lambda)} z_\lambda.
\eeq
Conceptually, we will think of this inner product as being obtained from $\langle \cdot , \cdot \rangle_\beta$,  upon approximating $\mathcal{N}_\lambda^{(\beta)}(M) \approx 1$, which occurs for large $M$. The $\Gamma$ factors are a matter of normalization.

\subsection{Schur polynomials}
Let $s_\lambda(x_1,\ldots,x_M)$ denote the Schur polynomial associated to the partition $\lambda=(\lambda_1,\ldots,\lambda_M)$:
\beq
s_\lambda(x_1,\ldots,x_M):=\frac{\det[(x_i^{\lambda_j+M-j})_{1 \leq i,j \leq M}]}{\det[(x_i^{M-j})_{1 \leq i,j \leq M}]}.
\eeq

The Schur polynomials are (rescaled) Jack polynomials $J^{(1)}$ and form a linear basis for the symmetric polynomials.  There is a combinatorial description for the product of two Schur functions. Namely, writing
\beql{LR}
s_\lambda s_\mu = \sum_{\nu} c_{\lambda,\mu}^\nu s_\nu,
\eeq
the Littlewood-Richardson rule states that $c_{\lambda,\mu}^\nu$ is equal to the number of Littlewood-Richardson tableaux of skew shape $\nu/\lambda$ and of weight $\mu$. The coefficients are known as the Littlewood-Richardson coefficients, and appear in many other mathematical contexts.

\subsection{Transition matrices}

When discussing transition matrices between $\Q$-bases of symmetric polynomials \cite[ I \S 6]{MR3443860}, we index rows and columns by partitions of a positive integer $n$, arranged in reverse lexicographical order (so that $[n]$ is first and $[1^n]$ is last). A matrix $(M_{\lambda \mu})$ is \emph{strictly upper triangular} if $M_{\lambda \mu}=0$ unless $\mu \preceq \lambda$ in the dominance order on partitions. The strictly upper triangular matrices form a group. Given two $\Q$-bases $(u_\lambda), (v_\lambda)$, we denote by $M(u,v)$ the matrix $(M_{\lambda \mu})$ of coefficients in the equations
\beq
u_\lambda = \sum_\mu M_{\lambda \mu} v_\mu.
\eeq
$M(u,v)$ is called the \emph{transition matrix} from the basis $(u_\lambda)$ to the basis $(v_\lambda)$. 

The following result makes appearance in experimental form in \cite{gomez} and is derived using physical arguments in \cite{1110.6720}.

\begin{Proposition}\label{transitionsJ}
The transition matrix $M(s,J^{(2/\beta)})$ is strictly upper triangular with respect to the dominance ordering of partitions.
\end{Proposition}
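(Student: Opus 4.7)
The plan is to factor the transition matrix $M(s,J^{(2/\beta)})$ through the basis of monomial symmetric functions $\{m_\mu\}$, exploiting the triangularity that both Schur and Jack polynomials enjoy with respect to that basis.

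First I would recall the Kostka expansion
\[
s_\lambda \;=\; \sum_{\mu \preceq \lambda} K_{\lambda \mu}\, m_\mu, \qquad K_{\lambda \lambda}=1,
\]
see \cite[I.6]{MR3443860}. In the language of the excerpt this says that $M(s,m)$ is strictly upper triangular with unit diagonal, and in particular invertible.

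Next I would invoke the analogous and more substantial fact for the Jack polynomials, which is essentially one of their defining properties (see \cite[VI.10]{MR3443860}):
\[
J_\lambda^{(2/\beta)} \;=\; \sum_{\mu \preceq \lambda} v_{\lambda \mu}(2/\beta)\, m_\mu, \qquad v_{\lambda \lambda}(2/\beta) \neq 0.
\]
Thus $M(J^{(2/\beta)},m)$ is also strictly upper triangular with invertible diagonal. The proposition then follows from the factorization
\[
M(s,J^{(2/\beta)}) \;=\; M(s,m)\cdot M(J^{(2/\beta)},m)^{-1},
\]
together with the group property of strictly upper triangular matrices stated in the excerpt.

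The only genuine subtlety -- and thus what I would treat as the main obstacle, though a mild one -- is to verify that this group property really holds for the partial (not total) dominance order $\preceq$. The standard device is to choose any linear extension of $\preceq$ to a total order on partitions of the fixed weight $n$; with respect to this total order the two matrices above are honestly upper triangular with nonzero diagonal, so their inverse and product are honestly upper triangular, and one then checks that each nonzero entry of the product corresponds to a chain whose endpoints are comparable in the original partial order. This verification is routine combinatorics and is precisely what justifies the ``group'' assertion in the excerpt.
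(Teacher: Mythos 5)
Your proof is correct and follows essentially the same route as the paper: both factor $M(s,J^{(2/\beta)})$ through the monomial basis as $M(s,m)\,M(J^{(2/\beta)},m)^{-1}$ and invoke the triangularity of both factors together with the group property of (dominance-)triangular matrices. Your additional remarks on invertible diagonals and on verifying the group property via a linear extension of the dominance order are sound elaborations of what the paper takes for granted.
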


\begin{proof}
Let $(m_\lambda)$ denote the the monomial basis for the symmetric polynomials. The transition matrices $M(s,m)$ and $M(J^{(2/\beta)},m)$ are strictly upper triangular. Consequently,
\beq
M(s,J^{(2/\beta)})=M(s,m)M(m,J^{(2/\beta)})=M(s,m)M(J^{(2/\beta)},m)^{-1}
\eeq
is strictly upper triangular.
\end{proof}

Define the transition coefficients $d_\nu^\lambda(\beta)=d_\nu$ from Schur polynomials to Jack polynomials:

\beq
s_\lambda(x_1,\ldots,x_n)=\sum_{\nu \preceq \lambda} d_\nu^\lambda J_\nu^{(2/\beta)}(x_1,\ldots,x_n).
\eeq

\section{Calculating the IPRs of XX}

Set 
\beq
\Delta(x;\beta)=\prod_{i \neq j} (1-x_i x_j^{-1})^{\beta/2}.
\eeq
Then
\beq
\Delta(x;\beta)=\prod_{i<j}[(x_j-x_i)(x_j^{-1}-x_i^{-1})]^{\beta/2}.
\eeq
In particular,
\beql{vandtorus}
\Delta(e^{i \theta};\beta)=\prod_{k < l} |e^{i \theta_l}-e^{i \theta_k}|^{\beta}.
\eeq
Define $T_L$ to be a discrete torus:
\beq
T_L:=\{e^{i \theta} \in \C  :  \theta \in \frac{2\pi \Z}{L} \, \forall j\}.
\eeq
Define a scalar product on symmetric polynomials by
\beql{discreteHall}
\langle f, g \rangle_{L;\beta} := \frac{1}{L^M M!} \sum_{T_L^M} f(z) \overline{g}(z) \Delta(z; \beta).
\eeq
In terms of this ``discrete" Hall inner product, 
\beql{IPR3}
\mathbf{t}_\lambda = \frac{\langle s_\lambda^2, s_\lambda^2 \rangle_{L;4}}{L^{M}}.
\eeq
Note that 
\beq
\sum_{z_1,\ldots,z_M \in T_L} z_1^{a_1} z_2^{a_2} \cdots z_M^{a_M} = \begin{cases} L^M & \text{ if }a_1 \equiv a_2 \equiv \cdots \equiv a_M \equiv 0 \pmod{L} \\
0 & \text{ otherwise}.
\end{cases}
\eeq
Consequently, the inner product can be expressed as an extraction of coefficients:
\beql{coefficientsmodL}
\langle f, g \rangle_{L;\beta}=\frac{1}{M!}\sum_{i_1,\ldots,i_M \in \Z} [x_1^{L i_1} \cdots x_M^{L i_M}] f(x_1,\ldots,x_M) g(x_1^{-1},\ldots,x_M^{-1}) \Delta(x; \beta).
\eeq

As we show next, for fixed $f,g,M$ and all sufficiently large $L$,
\beq
\langle f, g \rangle_{L;\beta} = \frac{1}{M!}\CT[f(x) g(x^{-1}) \Delta(x;\beta)]=\frac{1}{M!} \int_T f(z) \overline{g(z)} \Delta(z;\beta), 
\eeq
the Hall inner product of \eqref{Hall}.

\begin{Proposition}\label{conditionsforHall}
Let $f,g$ be symmetric polynomials and suppose that $\beta/2 \in \N$. Let $\deg_i(f)$ denote the degree of $x_i$ in $f$. Set
\beq
p_1:=\max_i \deg_i(f), \quad q_1:=\max_i \deg_i(g). 
\eeq
\beq
p_2:=\min_i \deg_i(f), \quad q_2:=\min_i \deg_i(g). 
\eeq
Then 
\beq
|\max_i \deg_i (f \bar{g} \Delta(x;\beta))| \leq \max\{|p_1-q_2|,|p_2-q_1|\}+(\beta/2)(M-1).
\eeq
\end{Proposition}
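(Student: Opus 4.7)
The plan is a direct degree count, tracking the range of exponents of each variable $x_i$ through the three factors $f$, $\bar g(x)=g(x^{-1})$, and $\Delta(x;\beta)$. The hypothesis $\beta/2 \in \mathbb{N}$ is essential so that each factor of $\Delta(x;\beta)$ can be expanded by the binomial theorem into finitely many Laurent monomials; without integrality one cannot speak of $x_i$-degrees in $\Delta$ at all.

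First I would pin down the range of $x_i$-exponents appearing in $\Delta(x;\beta)=\prod_{i \neq j}(1-x_i x_j^{-1})^{\beta/2}$. Fix an index $i$. For each $j \neq i$, the factor $(1-x_i x_j^{-1})^{\beta/2}$ expands into monomials whose $x_i$-exponent lies in $\{0,1,\ldots,\beta/2\}$, while the companion factor $(1-x_j x_i^{-1})^{\beta/2}$ expands into monomials whose $x_i$-exponent lies in $\{-\beta/2,\ldots,0\}$. Since there are $M-1$ factors of each type, multiplying shows that the $x_i$-exponent of any monomial of $\Delta(x;\beta)$ lies in $[-(\beta/2)(M-1),\,(\beta/2)(M-1)]$.

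Next, by the definitions of $p_1,p_2,q_1,q_2$, every monomial of $f$ has $x_i$-exponent in $[p_2,p_1]$ for every $i$, and every monomial of $\bar g$ has $x_i$-exponent in $[-q_1,-q_2]$. Multiplying the three factors and adding the endpoint bounds, the $x_i$-exponent of any monomial of $f\,\bar g\,\Delta(x;\beta)$ must lie in
\[
[\,p_2 - q_1 - (\beta/2)(M-1),\ \ p_1 - q_2 + (\beta/2)(M-1)\,].
\]
Applying the triangle inequality to each endpoint bounds its absolute value by $\max\{|p_1-q_2|,|p_2-q_1|\}+(\beta/2)(M-1)$, which is exactly the claimed inequality.

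There is no genuine mathematical obstacle here---the argument is a bookkeeping exercise---but some care is needed with the $\Delta$ factor, since each variable $x_i$ participates in both ``numerator'' and ``denominator'' roles across the product indexed by ordered pairs $(i,j)$ with $i\neq j$. Counting only the $(M-1)$ positive-exponent factors (or only the $(M-1)$ negative-exponent ones) would give the wrong symmetric range, so the doubling must be tracked explicitly.
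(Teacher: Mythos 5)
Your proof is correct and follows essentially the same route as the paper: a monomial-by-monomial degree count in which the $x_i$-exponent of any term of $f\,\bar g\,\Delta(x;\beta)$ splits as (exponent from $f$) $-$ (exponent from $g$) $+$ (exponent from $\Delta$), with the first two bounded by $\max\{|p_1-q_2|,|p_2-q_1|\}$ and the last by $(\beta/2)(M-1)$. The only cosmetic difference is how the $\Delta$-contribution is bounded --- you expand each factor $(1-x_i x_j^{-1})^{\beta/2}$ binomially, while the paper writes $\Delta(x;\beta)$ as $\beta/2$ pairs of Vandermonde factors and expands them into permutation monomials $x^{\sigma\delta}$ with $\delta=(0,1,\ldots,M-1)$ --- and both give the same bound.
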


\begin{proof}
Let $\delta=(0,1,\ldots,M-1)$. The product $f \bar{g} \Delta(x;\beta)$ is a sum of monomials of the following form: with $\sigma^{(i_1)},\sigma^{(i_2)}$ permutations, $x_1^{f_1} \cdots x_M^{f_M}$ monomials coming from $f$ and $x_1^{g_1} \cdots x_M^{g_M}$ monomials from $g$,
\beq
x_1^{f_1} \cdots x_M^{f_M} x_1^{-g_1} \cdots x_M^{-g_M} \prod_{i=1}^{\beta/2} (x^{\sigma^{(i_1)}\delta} x^{-\sigma^{(i_2)}\delta}).
\eeq
The degree of $x_i$ is equal to
\beq
f_i-g_i+\sum_{i=1}^{\beta/2} \sigma^{(i_1)}(\delta)-\sigma^{(i_2)}(\delta).
\eeq
It satisfies
\beq
|f_i-g_i+\sum_{i=1}^{\beta/2} \sigma^{(i_1)}(\delta)-\sigma^{(i_2)}(\delta)| \leq \max\{|p_1-q_2|,|p_2-q_1|\}+(\beta/2)(M-1).
\eeq
\end{proof}

\begin{example}
\underline{Proposition \ref{conditionsforHall} is tight}. 
Taking $M=4, \lambda=[2,1,1]$ and $\beta=2$, we have
\beq
p_1=q_1=2, \quad p_2=q_2=0.
\eeq
The proposition guarantees that the maximum degree of a variable in $f \bar{g} \Delta(x;2)$ does not exceed $M+1=5$. This means that when $L>5$, the proposition guarantees that the discrete Hall inner product will be equal to the Hall inner product. Computing on the monomial symmetric functions,

\beq
\langle m_\lambda, m_\lambda \rangle_{4;2}=16
\eeq 
\beq
\langle m_\lambda, m_\lambda \rangle_{5;2}=9
\eeq
\beq
\langle m_\lambda, m_\lambda \rangle_{L;2}=\frac{1}{M!}\CT[m_\lambda \overline{m}_\mu \Delta(x)]=\langle m_\lambda, m_\lambda \rangle_2=2 \quad  \forall L \geq 6.
\eeq

We see that the proposition is tight in the sense that no smaller $L$ would work.
\end{example}

Proposition \ref{conditionsforHall} clarifies when we can truncate the sum in \eqref{coefficientsmodL}. Namely, each $i$ should satisfy
\beq
Li \leq \max\{|p_1-q_2|,|p_2-q_1|\}+(\beta/2)(M-1).
\eeq

The case $\beta=4$ is of particular  interest to us. According to Proposition \ref{conditionsforHall}, if $M,L, \lambda,\mu$ satisfy $L-2M>\max\{\lambda_1,\mu_1\}-2$, then 
\beq
\langle J_\lambda^{(1/2)},J_\mu^{(1/2)} \rangle_{L;4}=\langle J_\lambda^{(1/2)},J_\mu^{(1/2)} \rangle_{4}.
\eeq
In particular, under these conditions these Jack polynomials are orthogonal with respect to the scalar product $\langle \cdot , \cdot \rangle_{L;4}$.

\begin{Theorem}
Suppose that $L$ and $M$ are given. Let $\lambda$ be a fixed partition with $\lambda_1 <(1/2)(L-2M+2)$. Let $s_\lambda, J_\mu^{(1/2)}$ be the Schur and Jack polynomials, respectively. Write
\beq
s_\lambda^2=\sum_{\nu} r_\lambda^\nu J_\nu^{(1/2)}.
\eeq
The IPR for partition $\lambda$ is equal to
\beq
\mathbf{t}_\lambda= \frac{(2M)!}{M!(2L)^M} \sum_\nu \mathcal{N}_\nu^{(1/2)}(M) \mathcal{C}_\nu(1/2)    (r_\lambda^\nu)^2.
\eeq 
\end{Theorem}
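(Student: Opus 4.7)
My plan is to start from the expression \eqref{IPR3}, $\mathbf{t}_\lambda = \langle s_\lambda^2, s_\lambda^2\rangle_{L;4}/L^M$, expand the Schur square in the Jack basis, upgrade the discrete Hall pairing to the continuous Hall pairing using Proposition \ref{conditionsforHall}, and then finish with the orthogonality relation \eqref{finiteHall} at $\beta=4$. Substituting $s_\lambda^2 = \sum_\nu r_\lambda^\nu J_\nu^{(1/2)}$ and using that the transition coefficients $r_\lambda^\nu$ are rational (hence real, so complex conjugation is trivial on them) produces the bilinear expansion
\begin{equation*}
\langle s_\lambda^2, s_\lambda^2\rangle_{L;4} = \sum_{\nu,\mu} r_\lambda^\nu\, r_\lambda^\mu\, \langle J_\nu^{(1/2)}, J_\mu^{(1/2)}\rangle_{L;4}.
\end{equation*}

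The crucial step is to justify replacing each $\langle J_\nu^{(1/2)}, J_\mu^{(1/2)}\rangle_{L;4}$ by its continuous counterpart. Rather than apply Proposition \ref{conditionsforHall} to each Jack pair separately, I would apply it directly to $f=g=s_\lambda^2$: since each variable has degree at most $\lambda_1$ in $s_\lambda$ (by the semistandard tableau formula), we have $p_1=q_1\le 2\lambda_1$ and $p_2=q_2\ge 0$, so the proposition gives $|\deg_i(s_\lambda^2\,\overline{s_\lambda^2}\,\Delta(x;4))|\le 2\lambda_1 + 2(M-1)$. Under the hypothesis $\lambda_1<(L-2M+2)/2$ this is strictly smaller than $L$, so in \eqref{coefficientsmodL} only the $i_1=\cdots=i_M=0$ index contributes, giving $\langle s_\lambda^2, s_\lambda^2\rangle_{L;4}=\langle s_\lambda^2, s_\lambda^2\rangle_{4}$. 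Since the same truncation works verbatim when $s_\lambda^2$ is replaced by its Jack components (whose degrees are majorized by those of $s_\lambda^2$), the discrete-vs-continuous equality passes to each $\langle J_\nu^{(1/2)}, J_\mu^{(1/2)}\rangle_{L;4}$ appearing in the expansion.

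Finally, orthogonality \eqref{finiteHall} at $\beta=4$ collapses the double sum to the diagonal, and the Gamma factors simplify via $\Gamma(1+2M)=(2M)!$ and $\Gamma(3)^M=2^M$, yielding
\begin{equation*}
\langle J_\nu^{(1/2)}, J_\nu^{(1/2)}\rangle_{4} = \mathcal{C}_\nu(1/2)\,\mathcal{N}_\nu^{(1/2)}(M)\,\frac{(2M)!}{M!\cdot 2^M}.
\end{equation*}
Plugging this back into \eqref{IPR3} and combining $L^M\cdot 2^M=(2L)^M$ produces the claimed identity. The only delicate point is the degree bookkeeping in the middle step; everything else is formal once orthogonality is in hand.
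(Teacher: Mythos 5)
Your proposal is correct and follows essentially the same route as the paper: start from $\mathbf{t}_\lambda=\langle s_\lambda^2,s_\lambda^2\rangle_{L;4}/L^M$, use Proposition \ref{conditionsforHall} (with the degree bound $2\lambda_1+2(M-1)<L$ coming from $\lambda_1<(L-2M+2)/2$) to replace the discrete Hall pairing by the continuous one, expand in the Jack basis, and apply the orthogonality \eqref{finiteHall} at $\beta=4$ with $\Gamma(1+2M)=(2M)!$, $\Gamma(3)^M=2^M$. The only difference is cosmetic — you expand in the Jack basis before upgrading discrete to continuous and spell out the degree bookkeeping that the paper leaves implicit under ``these conditions on $\lambda$.''
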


\begin{proof}
Under these conditions on $\lambda$,
\beq
\mathbf{t}_\lambda = \frac{\langle s_\lambda^2, s_\lambda^2 \rangle_{L;4}}{L^{M}}=\frac{\langle s_\lambda^2, s_\lambda^2 \rangle_{4}}{L^{M}}.
\eeq
We have
\beq
\langle s_\lambda^2, s_\lambda^2 \rangle_{4}=\langle \sum_{\nu} r_\lambda^\nu J_\nu^{(1/2)}, \sum_{\nu} r_\lambda^\nu J_\nu^{(1/2)} \rangle_4=\sum_{\nu} (r_{\lambda}^{\nu})^2 \langle J_\nu^{(1/2)},J_\nu^{(1/2)} \rangle_4.
\eeq
Applying \eqref{finiteHall} for $\beta=4$ yields the result.
\end{proof}

\begin{example}
The simplest excited state has $\lambda=[1]$. Computing,
\beq
s_{[1]}^2=s_{[1,1]}+s_{[2]}.
\eeq
and
\beq
s_{[1,1]}=(1/2)J_{[1,1]}, \quad s_{[2]}=(2/3)J_{[2]}-(1/6)J_{[1,1]},
\eeq
so that
\beq
s_{[1]}^2=(2/3)J_{[2]}+(1/3)J_{[1,1]}.
\eeq
The IPR is
\beq
\mathbf{t}_{[1]}=(1/L^M)\langle s_{[1]}^2,s_{[1]}^2 \rangle_L=(1/L^M)\langle (2/3)J_{[2]}+(1/3)J_{[1,1]},(2/3)J_{[2]}+(1/3)J_{[1,1]} \rangle_L.
\eeq

We also compute
\beq
C_{[1,1]}=3/2, \quad C_{[2]}=3/4.
\eeq
\beq
\mathcal{N}_{[1,1]}=\frac{M}{M-1/2}\frac{M-1}{M-3/2}, \quad \mathcal{N}_{[2]}=\frac{M}{M-1/2}\frac{M+1/2}{M}.
\eeq
Consequently,
\beq
L^M \mathbf{t}_{[1]}=(4/9)\langle J_{[2]}, J_{[2]} \rangle_L+(4/9)\langle J_{[2]},J_{[1,1]} \rangle_L+(1/9)\langle J_{[1,1]}, J_{[1,1]} \rangle_L.
\eeq
Assuming $L>2M+2$, we can replace the discrete Hall inner products with the Hall inner product. Then
\beq
\langle J_{[2]},J_{[2]} \rangle_L=  C_{[2]} \mathcal{N}_{[2]} \frac{(2M)!}{M! 2^M}, \quad   \langle J_{[1,1]},J_{[1,1]} \rangle_L =  C_{[1,1]} \mathcal{N}_{[1,1]} \frac{(2M)!}{M! 2^M}, \quad \langle J_{[1,1]},J_{[2]} \rangle_L = 0.
\eeq
This gives us the exact value of the IPR $\mathbf{t}_{[1]}$. Assuming further that $M$ is large, so that $\mathcal{N} \approx 1$, yields
\beq
\mathbf{t}_{[1]} \approx (1/2) \frac{(2M)!}{M!L^M 2^M}=(1/2) \mathbf{t}_{0}.
\eeq
\end{example}

\begin{example} \underline{Table of IPRs}.
\begin{table}[H]
\centering
$\begin{tabu}{|l|l|}\hline
  \lambda & \mathbf{t}_\lambda \\
  \hline
  [] & \mathbf{t}_0 \\ \hline
  [1] & (1/2)\mathbf{t}_0 \\ \hline
   [2] & (11/32)\mathbf{t}_0 \\ \hline
   [1,1] & (11/32)\mathbf{t}_0 \\ \hline
   [3] & (17/64)\mathbf{t}_0 \\ \hline
     [2,1] & (1/4)\mathbf{t}_0 \\ \hline
       [1,1,1] & (17/64)\mathbf{t}_0 \\ \hline
         [4] & (1787/8192)\mathbf{t}_0 \\ \hline
        [3,1] & (1451/8192)\mathbf{t}_0 \\ \hline
          [2,2] & (99/512)\mathbf{t}_0 \\ \hline
            [2,1,1] & (1451/8192)\mathbf{t}_0 \\ \hline
            [1,1,1,1] & (1787/8192)\mathbf{t}_0 \\ \hline
\end{tabu}$
\caption{Table of IPRs, assuming $L-2M>2 \lambda_1$ and $\mathcal{N} \approx 1$, the latter occurring when $M$ is large.}
\end{table}

\end{example}

The table suggests the following property:
\beq
\langle J_\lambda^{(1)},J_\lambda^{(1)} \rangle_{4}'=\langle J_{\lambda'}^{(1)},J_{\lambda'}^{(1)} \rangle_{4}'.
\eeq
Here, we recall, $J^{(1)}$ is a (rescaled) Schur polynomial and the inner product is the Hall inner product for parameter $\beta=4$ (or $\alpha=1/2$ in the symmetric polynomial literature). We emphasize that the inner product has a different parameter than the Jack polynomials being operated on. Moreover, generally even
\beq
\langle J_{\lambda}^{(1/2)},J_{\lambda}^{(1/2)} \rangle_{4}' \neq \langle J_{\lambda'}^{(1/2)},J_{\lambda'}^{(1/2)} \rangle_{4}',
\eeq
where the parameters are matching. Thus the choice of Schur polynomials is special among the Jack polynomials.

\begin{Theorem}
Let $\lambda, \mu$ be partitions and $s_\lambda$ the Schur polynomial. For any $\beta$,
\beq
\langle s_\lambda, s_\mu \rangle_{\beta}'=\langle s_{\lambda'}, s_{\mu'} \rangle_{\beta}'
\eeq
\end{Theorem}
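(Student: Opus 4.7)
The plan is to reduce the statement to the classical fact that the algebra involution $\omega \colon \Lambda \to \Lambda$, defined on power sums by $\omega(p_r) = (-1)^{r-1} p_r$, satisfies $\omega(s_\lambda) = s_{\lambda'}$ (Macdonald I \S 2--3). Given this, the theorem will follow once I establish that $\omega$ is an isometry for the bilinear pairing $\langle \cdot, \cdot \rangle_\beta'$, since then
\[
\langle s_{\lambda'}, s_{\mu'} \rangle_\beta' = \langle \omega(s_\lambda), \omega(s_\mu) \rangle_\beta' = \langle s_\lambda, s_\mu \rangle_\beta'.
\]

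To verify the isometry, I would work in the power-sum basis, where the form is by definition diagonal. Since $\omega$ is an algebra homomorphism, $\omega(p_\nu) = \prod_i (-1)^{\nu_i-1} p_{\nu_i} = (-1)^{|\nu|-l(\nu)} p_\nu$ for any partition $\nu$. Consequently,
\[
\langle \omega(p_\lambda), \omega(p_\mu) \rangle_\beta' = (-1)^{(|\lambda|-l(\lambda)) + (|\mu|-l(\mu))} \langle p_\lambda, p_\mu \rangle_\beta'.
\]
The defining relation $\langle p_\lambda, p_\mu \rangle_\beta' = \delta_{\lambda,\mu}(2/\beta)^{l(\lambda)} z_\lambda$ forces the right-hand side to vanish unless $\lambda = \mu$, in which case the sign exponent $2(|\lambda|-l(\lambda))$ is even and the sign is $+1$. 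Extending by bilinearity, $\omega$ preserves $\langle \cdot, \cdot \rangle_\beta'$ on all of $\Lambda$, and the theorem follows from the display above.

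The main obstacle is conceptual rather than computational: one has to notice that the $\beta$-deformation of the Hall form scales $p_\lambda$ by a factor depending only on $l(\lambda)$, so the signs introduced by $\omega$ square to $+1$ \emph{for every} $\beta$. This is what singles out Schur polynomials among the Jack family for this conjugation symmetry, consistent with the authors' warning that the analogous equality fails for $J_\lambda^{(1/2)}$ paired against $\langle \cdot, \cdot \rangle_4'$: the appropriate involution on Jack polynomials moves the Jack parameter, and therefore is not an isometry when the parameter of the form is held fixed.
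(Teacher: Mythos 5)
Your proof is correct, and it takes a genuinely simpler route than the paper's. You use only the classical involution $\omega$ (the paper's $\omega_2$), together with two standard facts: $\omega(s_\lambda)=s_{\lambda'}$, and $\omega(p_\lambda)=(-1)^{|\lambda|-l(\lambda)}p_\lambda$. Since $\langle p_\lambda,p_\mu\rangle_\beta'=\delta_{\lambda\mu}(2/\beta)^{l(\lambda)}z_\lambda$ is diagonal in the power-sum basis and the sign squares to $+1$ on the diagonal, $\omega$ is an isometry of $\langle\cdot,\cdot\rangle_\beta'$ for every $\beta$, and the identity $\langle s_{\lambda'},s_{\mu'}\rangle_\beta'=\langle\omega s_\lambda,\omega s_\mu\rangle_\beta'=\langle s_\lambda,s_\mu\rangle_\beta'$ follows at once (the extension by bilinearity is harmless because everything is homogeneous of finite degree). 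The paper instead works with the two-parameter automorphism $\omega_{q,t}$: it invokes the self-adjointness of $\omega_{u,v}$ with respect to $\langle\cdot,\cdot\rangle_{q,t}'$, the relation $\langle\omega_{t,q}f,g\rangle_{q,t}'=\langle\omega_2 f,g\rangle_2'$, the commutation $\omega_{q,t}\omega_2=\omega_2\omega_{q,t}$, and the limit $(q,t)\to(1,1)$ along $q=t^{2/\beta}$ to shuttle between the $\beta$-form and the $\beta=2$ form. Your argument avoids the $(q,t)$ machinery and the limiting procedure entirely, and it isolates the real reason the statement holds for all $\beta$ simultaneously: the deformation rescales $p_\lambda$ by a factor depending only on $\lambda$, which cannot detect the signs introduced by $\omega$. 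It also cleanly accounts for the paper's observation that the analogue fails for $J_\lambda^{(1/2)}$ paired against $\langle\cdot,\cdot\rangle_4'$, since the involution adapted to Jack polynomials changes the Jack parameter and so is not an isometry of a fixed-parameter form. What the paper's heavier route buys is that it situates the identity inside the Macdonald $(q,t)$ framework used elsewhere in the text, but for this particular statement your direct argument is complete and preferable.
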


\begin{proof}
Following \cite[VI \S 2]{MR3443860}, consider the symmetric polynomials with coefficients in $\Q(q,t)$ and define an inner product on the basis of power sum polynomials $p_\mu$:
\beq
z_\mu(q,t):=z_\mu \prod_{i=1}^{l(\mu)} \frac{1-q^{\mu_i}}{1-t^{\mu_i}}, \quad \langle p_\mu, p_\kappa \rangle_{q,t}':=\delta_{\mu,\kappa}z_\mu(q,t).
\eeq
This inner product is a q-analogue of the inner product $\langle \cdot , \cdot \rangle_{\beta}'$. Indeed, denote the limit $(q,t) \rightarrow (1,1)$ with $q=t^{2/\beta}$ by $(q,t) \xrightarrow[2/\beta]{} (1,1)$. Then 
\beq
 \lim_{(q,t) \xrightarrow[2/\beta]{} (1,1)} \langle \cdot , \cdot \rangle_{(q,t)}'= \langle \cdot , \cdot \rangle_{\beta}'.
\eeq
Let $\omega_{t,q}$ be the standard automorphism on symmetric functions with coefficients in $\Q(q,t)$: 
\beq
\omega_{q,t}(p_\lambda)=(-1)^{|\lambda|+l(\lambda)} p_\lambda \prod_{i=1}^{l(\lambda)} \frac{1-q^{\lambda_i}}{1-t^{\lambda_i}}
\eeq
Let 
\beq
\omega_\beta=\lim_{(q,t) \xrightarrow[2/\beta]{} (1,1)} \omega_{q,t}.
\eeq
 The automorphism $\omega_{q,t}$ satisfies
\beq
\omega^{-1}_{q,t}=\omega_{t,q}
\eeq
\beq
\langle \omega_{u,v} f, g \rangle_{q,t}'=\langle  f, \omega_{u,v} g \rangle_{q,t}'
\eeq
\beq
\langle \omega_{t,q} f, g \rangle_{q,t}'= \langle \omega_2 f, g \rangle_{2}'
\eeq
\beq
\langle \omega_2 f, \omega_2 g \rangle_2'=\langle  f,  g \rangle_2'.
\eeq
The Schur polynomials satisfy
\beq
\omega_2 s_\lambda = s_{\lambda'}.
\eeq
Putting this together,
\beq
\langle  s_\lambda, s_\mu \rangle_{\beta}'=\langle  \omega_\beta^{-1} s_\lambda, \omega_\beta s_\mu \rangle_{\beta}'=
\langle \omega_2 s_{\lambda}, \omega_\beta s_{\mu} \rangle_{2}'.
\eeq
By evaluating on the power sum basis, we can check that
\beq
\omega_{q,t} \omega_2 =  \omega_2 \omega_{q,t}.
\eeq
Consequently,
\beq
\langle \omega_2 s_{\lambda}, \omega_\beta s_{\mu} \rangle_{2}'= \langle \omega_2 \omega_2 s_{\lambda}, \omega_2 \omega_\beta s_{\mu} \rangle_{2}'= \langle \omega_2 \omega_2 s_{\lambda},\omega_\beta  \omega_2  s_{\mu} \rangle_{2}'=\langle \omega_2  s_{\lambda'},\omega_\beta   s_{\mu'} \rangle_{2}',
\eeq
which proves the result.
\end{proof}

\begin{Corollary}
Let $\lambda$ be a partition satisfying $\lambda_1<(1/2)(L-2M+2)$ with $M \gg |\lambda|$ (so that $\mathcal{N}_\lambda \approx 1$). Then the IPRs satisfy the duality relation
\beq
\mathbf{t}_\lambda=\mathbf{t}_{\lambda'}.
\eeq
\end{Corollary}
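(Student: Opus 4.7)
The plan is to combine the previous Theorem's Jack-expansion formula for $\mathbf{t}_\lambda$ with the just-proved Schur duality on the power-sum inner product $\langle \cdot, \cdot \rangle'_4$. First, because $\lambda_1 < (L-2M+2)/2$, the previous Theorem gives
$$\mathbf{t}_\lambda = \frac{(2M)!}{M!(2L)^M}\sum_\nu \mathcal{N}_\nu^{(1/2)}(M)\,\mathcal{C}_\nu(1/2)\,(r_\lambda^\nu)^2, \qquad s_\lambda^2 = \sum_\nu r_\lambda^\nu J_\nu^{(1/2)}.$$
The assumption $M \gg |\lambda|$ forces every $\nu$ appearing to satisfy $|\nu| = 2|\lambda| \ll M$, so $\mathcal{N}_\nu^{(1/2)}(M) \approx 1$ uniformly. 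Using orthogonality of the Jack polynomials under $\langle \cdot, \cdot \rangle'_4$, the sum collapses to
$$\mathbf{t}_\lambda \;\approx\; \frac{(2M)!}{M!(2L)^M}\,\langle s_\lambda^2, s_\lambda^2 \rangle'_4.$$

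Next I upgrade the Theorem's duality $\langle s_\lambda, s_\mu \rangle'_\beta = \langle s_{\lambda'}, s_{\mu'} \rangle'_\beta$ from linear to quadratic expressions in Schur polynomials. Expanding $s_\lambda^2 = \sum_\nu c^\nu_{\lambda\lambda} s_\nu$ via Littlewood-Richardson and using the standard conjugate symmetry $c^\nu_{\lambda\lambda} = c^{\nu'}_{\lambda'\lambda'}$, one has $s_{\lambda'}^2 = \sum_\nu c^\nu_{\lambda\lambda} s_{\nu'}$. Bilinearity and the Theorem then yield
$$\langle s_\lambda^2, s_\lambda^2 \rangle'_4 = \sum_{\nu_1,\nu_2} c^{\nu_1}_{\lambda\lambda} c^{\nu_2}_{\lambda\lambda} \langle s_{\nu_1}, s_{\nu_2} \rangle'_4 = \sum_{\nu_1,\nu_2} c^{\nu_1}_{\lambda\lambda} c^{\nu_2}_{\lambda\lambda} \langle s_{\nu_1'}, s_{\nu_2'} \rangle'_4 = \langle s_{\lambda'}^2, s_{\lambda'}^2 \rangle'_4.$$
An equivalent route is to note that the $\omega_2$ used in the Theorem's proof is a ring automorphism with $\omega_2 s_\lambda = s_{\lambda'}$, so $\omega_2(s_\lambda^2) = s_{\lambda'}^2$, and the isometry property of $\omega_2$ with respect to $\langle \cdot, \cdot \rangle'_4$ (implicit in the Theorem's proof) delivers the same identity in one line.

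Finally, because the Bethe labeling constrains $l(\lambda) \leq M$ and $|\lambda| \ll M$, the conjugate partition satisfies $\lambda'_1 = l(\lambda) \leq M \ll L$, so the previous Theorem and the $\mathcal{N} \approx 1$ approximation apply equally to $\lambda'$, yielding $\mathbf{t}_{\lambda'} \approx \frac{(2M)!}{M!(2L)^M}\langle s_{\lambda'}^2, s_{\lambda'}^2 \rangle'_4$. Equating gives $\mathbf{t}_\lambda = \mathbf{t}_{\lambda'}$. The main technical step is the extension of the Theorem's linear duality to the quadratic expression $s_\lambda^2$; this is precisely where the ring-homomorphism nature of $\omega_2$ (equivalently, the Littlewood-Richardson conjugate symmetry) is essential. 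Everything else is bookkeeping to ensure the two partitions simultaneously fall in the regime where the discrete Hall inner product, the continuous Hall inner product, and the power-sum inner product agree up to the universal factor $(2M)!/(M! 2^M)$.
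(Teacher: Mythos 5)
Your proposal is correct and takes essentially the same route as the paper: reduce to the standard Hall inner product $\langle \cdot,\cdot\rangle'_4$ using $\mathcal{N}\approx 1$, then obtain $\langle s_\lambda^2,s_\lambda^2\rangle'_4=\langle s_{\lambda'}^2,s_{\lambda'}^2\rangle'_4$ from the ring-automorphism property $\omega_2(s_\lambda^2)=s_{\lambda'}^2$ (equivalently the Littlewood--Richardson conjugate symmetry of the coefficients) combined with the Theorem's duality $\langle s_\nu,s_\mu\rangle'_\beta=\langle s_{\nu'},s_{\mu'}\rangle'_\beta$. The only cosmetic differences are that you pass through the explicit Jack-expansion formula rather than the ratio of inner products, and that you make explicit (as the paper does not) the need for $\lambda'$ to also lie in the admissible regime where the discrete and continuous inner products agree.
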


\begin{proof}
We have shown that under these hypotheses,
\beq
\mathbf{t}_\lambda=\frac{\langle s_\lambda^2, s_\lambda^2 \rangle_4}{L^{M}}.
\eeq
Moreover, since $\mathcal{N}_\lambda \approx 1$,
\beq
\frac{\langle s_\lambda^2, s_\lambda^2 \rangle_4}{\langle s_{\lambda'}^2, s_{\lambda'}^2 \rangle_4}=\frac{\langle s_\lambda^2, s_\lambda^2 \rangle_4'}{\langle s_{\lambda'}^2, s_{\lambda'}^2 \rangle_4'}.
\eeq
It will suffice to show that the latter ratio is equal to $1$. Write $s_\lambda^2$ and $s_{\lambda'}^2$ in the Schur basis:
\beq
s_\lambda^2=\sum_\nu c_\nu^{\lambda} s_\nu, \quad 
s_{\lambda'}^2=\sum_\nu c_\nu^{\lambda'} s_\nu.
\eeq
Since $\omega$ is an automorphism, for any partitions $\mu,\kappa$,
\beq
\omega(s_\mu s_\kappa)=s_{\mu'} s_{\kappa'}.
\eeq
In particular,
\beq
s_\lambda^2=\omega \omega(s_\lambda^2)=\omega(s_{\lambda'}^2) \implies s_{\lambda'}^2=\omega(s_\lambda^2).
\eeq
Comparing coefficients,
\beq
\sum_{\nu} c_{\nu}^{\lambda'} s_{\nu}=\sum_{\nu} c_{\nu}^{\lambda} s_{\nu'}=\sum_{\nu'} c_{\nu'}^\lambda s_{\nu} \implies c_{\nu}^{\lambda'}=c_{\nu'}^{\lambda}.
\eeq
Expanding
\beq
\langle s_\lambda^2, s_\lambda^2 \rangle_4'=\sum_{\nu_1,\nu_2} c_{\nu_1}^{\lambda} c_{\nu_2}^{\lambda} \langle s_{\nu_1}, s_{\nu_2} \rangle_4'=\sum_{\nu_1,\nu_2} c_{\nu_1'}^{\lambda} c_{\nu_2'}^{\lambda} \langle s_{\nu_1'}, s_{\nu_2'} \rangle_4'=\sum_{\nu_1,\nu_2} c_{\nu_1}^{\lambda'} c_{\nu_2}^{\lambda'} \langle s_{\nu_1'}, s_{\nu_2'} \rangle_4'
\eeq
\beq
=\sum_{\nu_1,\nu_2} c_{\nu_1}^{\lambda'} c_{\nu_2}^{\lambda'} \langle s_{\nu_1}, s_{\nu_2} \rangle_4'=\langle s_{\lambda'}^2, s_{\lambda'}^2 \rangle_4'.
\eeq

\end{proof}

We have shown that conjugate partitions have asymptotically (assuming $\mathcal{N} \approx 1$) equal IPRs. This is not the case for finite $M$, when $\mathcal{N} \neq 1$. It would be interesting to give a physical explanation for this duality.

\subsection{IPRs and diagonals of permutahedra}

An alternate expression for $\sum_{\mathbf{x}} |c(\mathbf{x})|^4$ is obtained by multiplying out the terms of the determinant:
\beq
\sum_{\mathbf{x}} |c(\mathbf{x})|^4=\frac{1}{M!} \sum_{P,Q,R,S \in S_M} \sgn(PQRS) \prod_{j=1}^M \sum_{x_1,\ldots,x_M=1}^L e^{i x_j(k_{P(j)}+k_{Q(j)}-k_{R(j)}-k_{S(j)})}=
\eeq
\beq
=\frac{L^M}{M!} \sum_{P,Q,R,S \in S_M} \sgn(PQRS) \delta(k_{P(j)}+k_{Q(j)}-k_{R(j)}-k_{S(j)} \equiv 0 \mod{2 \pi}, \forall j).
\eeq
Using $k_j = \frac{2 \pi I_j}{L}$, this sum is equal to
\beq
\frac{L^M}{M!} \sum_{P,Q,R,S \in S_M} \sgn(PQRS) \delta(I_{P(j)}+I_{Q(j)}-I_{R(j)}-I_{S(j)} \equiv 0 \mod{L}, \forall j)
\eeq
\beq
=\frac{L^M}{M!} \sum_{P,Q,R,S \in S_M} \sgn(PQRS) \delta(I_{P}+I_{Q}-I_{R}-I_{S} \equiv 0 \mod{L})
\eeq
Consequently,
\beql{count}
\mathbf{t}_{\lambda}=\frac{1}{L^M M!} \sum_{P,Q,R,S \in S_M} \sgn(PQRS)  \delta(I_{P}+I_{Q}-I_{R}-I_{S} \equiv 0 \mod{L}).
\eeq
 
We may interpret the sum as follows. A  permutahedron is the polytope obtained by taking the  convex hull of all permutations of a fixed vector $(I_1,I_2,\ldots,I_M)$ with distinct entries. We can interpret $I_{P}-I_{R}$ as a diagonal vector of this permutahedron. Then a condition of the form
\beq
I_P-I_R=I_S-I_Q
\eeq
would be a condition on two diagonals to be mutual translates and a congruence of the form
\beq
I_P-I_R \equiv I_S-I_Q \pmod{L}
\eeq
is interpreted similarly.

\section{Future Directions}

Several interesting questions remain to be studied. 

\bigskip

1. Experimental results suggest that 
\beq
\lim_{M \rightarrow \infty}  \frac{\langle J_\lambda^{(1/2)}, J_\lambda^{(1/2)} \rangle_{2M;4} }{\langle J_\lambda^{(1/2)}, J_\lambda^{(1/2)} \rangle_4 }=1.
\eeq
In other words, for a finite number of variables, the discrete Hall inner product is asymptotically equal to the Hall inner product when $L=2M$. The reason this is not a simple consequence of the convergence of a Riemann sum to the corresponding integral is that even though the number of sample points ($L$) on each torus increases, the number of torii ($M$) does as well. Nonetheless, experiment suggests that the error decreases (see Figure \ref{asymptotics}). If true, this will allow to extend the methods to compute the asymptotics of IPRs when $L=2M$.

\begin{figure}
\includegraphics[width=7in]{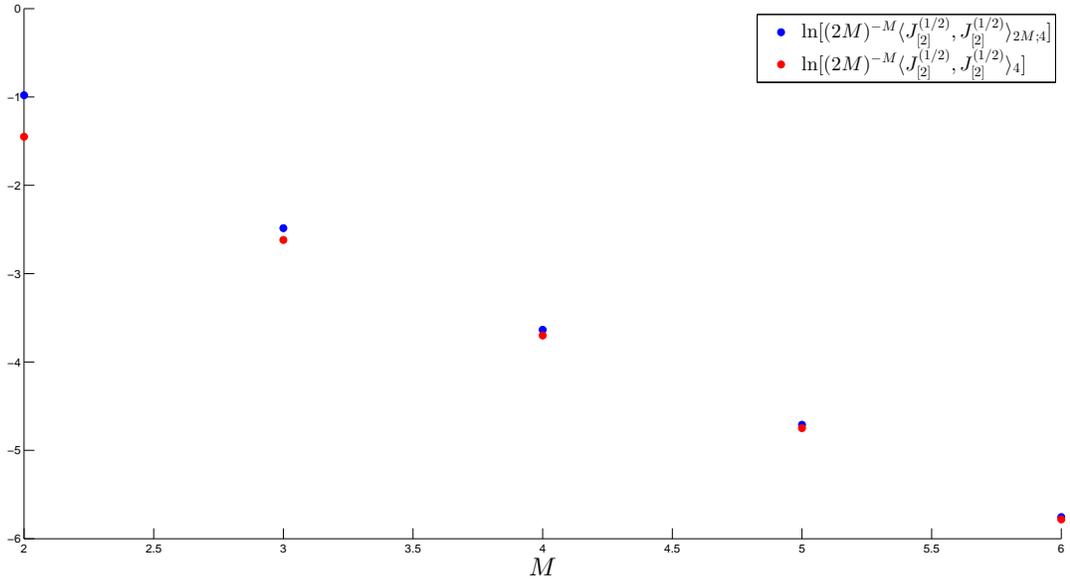}
\caption{\label{asymptotics} Plotted are the logs of $\frac{\langle J_{[2]}^{(1/2)},J_{[2]}^{(1/2)} \rangle_{L;4}}{L^M}$ and $\frac{\langle J_{[2]}^{(1/2)},J_{[2]}^{(1/2)} \rangle_{4}}{L^M}$ for $L=2M$.}
\end{figure}

\bigskip

2. There exists a dynamical interpretation of the IPRs (described in \cite{misguich2016inverse}). In the case when the spectrum is nondegenerate, it is given by summing the IPRs:
\beq
T:=\sum_k t_k=\sum_{i,k=1}^D | \langle a_i | \psi_k \rangle|^4
\eeq
This quantity measures how much the eigenstates are localized in the preferential basis. It can range from $T_{\text{min}}=1$ (the eigenstates are spread maximally over the whole basis) to $T_{\text{max}}=D$ (each eigenstate matches a basis vector). The ratio $T/D$ measures the stationary return probability to an initial basis state, averaged over all the basis states. The minimum value is reached if the dynamics connects any initial basis state to all the other basis states. On the other hand, if it takes on the maximum value, then the system does not evolve if it is initialized from a basis state.

\bigskip

3. The scalar product $\langle \cdot , \cdot \rangle_{L;\beta}$ is a discretization of the Hall inner product. Using the Gram-Schmidt orthonormalization procedure on the symmetric monomial functions, we may define an orthonormal basis which is triangular with respect to the monomial symmetric basis. Such ``discrete Jack polynomials" may have interesting properties.

\bigskip

4. The data table suggests that the IPR of the ground state is largest. The next few excited states, labeling by partitions, have IPRs which are approximately $1/2,1/3,1/4$ and $1/5$ of the ground state IPR, for $|\lambda|=1,2,3,4$. It is unlikely that this pattern will continue. It would be interesting to determine the distribution of the values of the IPRs.

\bigskip
{\bf Acknowledgments}.  
The author would like to thank Vir Bulchandani, Mark Haiman, Joel Moore and Lauren Williams for their helpful comments in the writing of this manuscript.

\bibliographystyle{alpha}
\bibliography{bibliography}

\end{document}